\definecolor{darkgreen}{rgb}{0.0, 0.75, 0.0}
\definecolor{orange}{rgb}{0.79, 0.38, 0.08}
\DeclareMathOperator*{\argmin}{argmin}
\title{Knowledge-Driven Semantic Communication Enabled by the Geometry of Meaning}
\author{
    Dylan Wheeler, \IEEEmembership{Graduate Student Member, IEEE,} and Balasubramaniam Natarajan, \IEEEmembership{Senior Member, IEEE}
    \thanks{D. Wheeler and B. Natarajan are with the Mike Wiegers Department of Electrical and Computer Engineering at Kansas State University (email: dylan84@ksu.edu)}
}
\theoremstyle{definition}
\newtheorem{defn}{Definition}
\newtheorem{ex}{Example}
\newtheorem{lem}{Lemma}
\begin{document}

\maketitle



\begin{tikzpicture}[remember picture,overlay]
    \node[align=center, font=\scriptsize] at ([yshift=2em]current page text area.north) {
        \textcopyright 2023 IEEE.  Personal use of this material is permitted.  Permission from IEEE must be obtained for all other uses, in any current or future\\ media, including reprinting/republishing this material for advertising or promotional purposes, creating new collective works, for resale\\ or redistribution to servers or lists, or reuse of any copyrighted component of this work in other works.
    };
\end{tikzpicture}

\begin{abstract}
As our world grows increasingly connected and new technologies arise, global demands for data traffic continue to rise exponentially. Limited by the fundamental results of information theory, to meet these demands we are forced to either increase power or bandwidth usage. But what if there was a way to use these resources more efficiently? This question is the main driver behind the recent surge of interest in semantic communication, which seeks to leverage increased intelligence to move beyond the Shannon limit of technical communication. In this paper we expound a method of achieving semantic communication which utilizes the conceptual space model of knowledge representation. In contrast to other popular methods of semantic communication, our approach is intuitive, interpretable and efficient. We derive some preliminary results bounding the probability of semantic error under our framework, and show how our approach can serve as the underlying knowledge-driven foundation to higher-level intelligent systems. Taking inspiration from a metaverse application, we perform simulations to draw important insights about the proposed method and demonstrate how it can be used to achieve semantic communication with a 99.9\% reduction in rate as compared to a more traditional setup.
\end{abstract}

\begin{IEEEkeywords}
    semantic communication, conceptual spaces, cognitive communications, metaverse, 6G
\end{IEEEkeywords}

\section{Introduction}
\label{sec_intro}

In the modern age of connectivity, wireless networks are continuing to grow in both importance and ubiquity. As the number and size of these networks increases, so too does the overall demand for data traffic. A recent report from Ericsson \cite{ericsson_report} projects that global data traffic will rise exponentially to more than 400 EB/month by the end of 2028, from just over 100 EB/month in 2022. Current wireless communication paradigms are ill-suited to handle these growing demands due to the information-theoretic capacity of a wireless channel, also known as the \textit{Shannon rate} \cite{2006_coverThomas_infoTheory}. This fundamental limit provides two choices to increase the capacity of a wireless channel. One is to increase the power of the transmitted signal, which is not a sustainable option to meet exponentially rising demand. The other is to increase bandwidth, which has been the conventional approach with the push toward millimeter wave (mmWave) technology in fifth generation (5G) mobile networks \cite{3gpp_rel16} and recent research into terahertz (THz) communications \cite{2022_chaccour_THz}. However, operating at these extreme frequencies presents significant design challenges, such as extreme path loss \cite{2006_goldsmith_wirelessCommText}.

Due to these challenging circumstances, there has been a recent surge of interest in pushing beyond the Shannon rate. Consider the three levels of communication problem \cite{1949_shannonWeaver_mathTheoryOfComm}:
\begin{enumerate}
    \item[A.] How accurately can the symbols of communication be transmitted? (The technical problem.)
    \item[B.] How precisely do the transmitted symbols convey the desired meaning? (The semantic problem.)
    \item[C.] How effectively does the received meaning affect conduct in the desired way? (The effectiveness problem.)
\end{enumerate}
The Shannon rate provides a limit for the technical problem. On the other hand, the true objective of communication is rarely to simply transmit symbols without error. More often, one is trying to convey some meaning (level B) or affect some action (level C). Focusing on these goals, rather than solely on the technical goal, changes the fundamental meaning of channel capacity, because it changes what is considered to be an \textit{error}.

For example, consider the statement ``This communication is semantic" is transmitted, but the sentence ``This comnunication is semantic'' is received. A symbol error has occurred; however, the semantics of the statement will almost certainly be preserved, i.e., a semantic error does not occur. From this simple example, we arrive at the intuition that \textit{not all syntactic errors will induce a semantic error}, given sufficient background knowledge and reasoning capabilities. Thus, there is a potential to increase the effective capacity of wireless channels by operating at these higher levels of communication. This intuition, along with recent advances in artificial intelligence (AI), has led to the investigation of what is commonly referred to as \textit{semantic communication}.

Processing information at the semantic and effective levels comes with its own challenges, however. First and foremost, it forces us to reconsider fundamental notions, such as how we define and quantify information. In this article, we focus on the semantic level and draw upon insights from our previous work \cite{2023_wheeler_semComLetter} where we first introduced the use of Peter G\"ardenfors' theory of conceptual spaces \cite{2000_gardenfors_conceptSpace, 2014_gardenfors_semanticTheory} as a framework for modeling semantics. We introduce for the first time both theoretical and practical aspects of our approach, and demonstrate the potential of conceptual spaces as a knowledge representation technique which can serve as the foundation for reasoning-driven semantic communication systems.

\subsection{Related Work}
\label{subsec_rel_work}

The field of semantic communication has experienced a resurgence in the past few years. At the core of semantic communication is knowledge, and thus different approaches to semantic communication can naturally be partitioned by the mechanism used to model knowledge, or represent meaning. We briefly overview some of these approaches here, and more details can be found in some recent surveys of the field \cite{2023_wheeler_semComSurvey, 2022_iyer_semanticSurvey, 2021_lan_semanticSurvey}.

The first attempt to develop a theory of semantic communication was made by Carnap and Bar-Hillel \cite{1954_carnapBar-Hillel_theorySemInfo}. The key to their approach was to extend what we now know as information theory to include semantics by focusing on logical probabilities, rather than statistical probabilities. Some more recent works have attempted to build on this theory \cite{2011_bao_towardTheorySemComm, 2012_basu_qualityInfoSemanticRelationships}. This approach comes with known difficulties, such as the Bar-Hillel Carnap Paradox \cite{2004_floridi_theoryStrongSemInfo} and the challenge of building an expressive knowledge base using binary logic.

Others circumvent the challenge of representing meaning by focusing more on the effective level rather than the semantic level. More specifically, they consider the \textit{significance} of information \cite{2021_uysal_semCommNetworkSystems}, leading to \textit{goal-oriented} communication. ``Significance'' is with respect to some goal or task, and can be thought as ``provisioning of the right and significant piece of information to the right point of computation (or actuation) at the right point in time'' \cite{2021_uysal_semCommNetworkSystems}. A popular metric exemplifying this approach is the age of information (AoI) \cite{2021_uysal_aoiPractice}, which quantifies the ``freshness'' of data. Another is the so-called value of information \cite{2019_molin_valueOfInfo}.

One popular way of representing meaning is through the use of \textit{knowledge graphs}. This is the underlying modeling technique of the semantic web \cite{2009_hitzler_foundationsSemWeb}. In a knowledge graph, meaning is expressed through relations (edges) that connect entities (nodes) in the graph. The authors of \cite{2022_yang_semComEdgeIntel} propose an edge-sharing knowledge graph to enable semantic communication among intelligent devices in a network. In \cite{2023_kang_saliencyTaskOrientedSemCommImages}, the authors develop a communication system powered by a knowledge graph with subject-relation-object triples for transmission of images among unmanned aerial vehicles (UAVs).

The most popular approach to semantic communication has been through the use of machine learning (ML). ML-based approaches mostly handle semantics in an implicit manner, where meaning is learned by observation of massive data. Often, what these meanings are or what they represent is not apparent, due to the black-box nature of deep networks. Many of these approaches were inspired by the initial success of DeepSC \cite{2021_xie_deepSC} and its variants \cite{xie_deepsc_vqa, xie_lite_deepsc, xie_deepsc_speech}, which implement transformer-based architectures and achieve promising results \cite{2022_zhou_semCommAdaptiveTxfmr, 2022_zhou_adaptiveBitRateSemComm, 2023_wang_knowledgeEnhancedSemCom}. However, many works taking this approach (including DeepSC) train the system with the goal of exactly reproducing the initial data \cite{2022_wang_transformerEmpowered6gMimoSemCom, xie_lite_deepsc}, i.e., they are still operating at the technical level of communication.

The approaches mentioned above all have significant limitations when modeling semantic information. The symbolic approaches using logical probabilities and knowledge graphs face combinatorial challenges when used to model large knowledge bases. Significance-based approaches do not directly address the semantic level of communication, and ML-based approaches suffer from a lack of interpretability and explainability. Some approaches to semantic communication cut across these categories, such as the work in \cite{2023_gunduz_beyondBits}, which takes an information-theoretic view of task-oriented semantic communication, posing it as a rate distortion problem with context as side information. The authors then connect these theoretical perspectives with a ML-based approach, making the framework more practical.

Despite the many approaches discussed above, there is currently no consensus on how semantic information should be modeled and quantified. In our previous work \cite{2023_wheeler_semComLetter}, we introduced a novel approach based on the theory of conceptual spaces \cite{2000_gardenfors_conceptSpace, 2014_gardenfors_semanticTheory}. The initial treatment did not include any theoretical analysis and was limited to trivial examples. In this paper, we further develop the theoretical foundations of this approach and apply it to a practical metaverse-inspired problem to highlight its potential for future semantic communications.

\subsection{Contributions}
\label{subsec_contri}

In this work, we aim to formalize key theoretical aspects of a semantic communication framework utilizing conceptual spaces. In addition, we make a connection between our approach and a recently proposed framework for \textit{reasoning-driven} semantic communication \cite{2022_chaccour_LessDataMoreKnowledge}, and show how conceptual spaces can provide the knowledge-driven foundation for such communication. We provide thorough simulation results from a metaverse-inspired problem setup to confirm the theoretical developments and further demonstrate the potential of our approach for highly efficient communication. The main contributions of this work can be summarized as:
\begin{itemize}
    \item We uncover key theoretical aspects of the conceptual space-based framework for semantic communication. Specifically, we provide formal definitions of the various elements (domain, property, etc.) and examine the central concept of semantic distortion and semantic error under these definitions.
    \item We further analyze this notion of semantic distortion, and derive bounds on the probability of semantic error under our proposed framework. The derived bounds are simple to obtain under reasonable assumptions and are based on the level of semantic distortion introduced at the semantic encoder and the channel, and can thus be used to inform robust system design.
    \item We show how our approach can lay the knowledge-driven groundwork for a reasoning-driven semantic communication system, such as that proposed in \cite{2022_chaccour_LessDataMoreKnowledge}. Specifically, we demonstrate how the notion of a \textit{semantic language} \cite{2022_chaccour_LessDataMoreKnowledge} naturally arises from a conceptual space and how ideas like context can be easily incorporated into the conceptual space model. 
    \item We provide extensive simulation results of the proposed approach on a metaverse-inspired problem. We examine the problem of virtual reality exposure therapy (VRET) \cite{2022_bansal_healthcareMetaverse} and simulate a semantic communication system designed around this problem. 
    \begin{itemize}
        \item We simulate a system with a theoretical semantic encoder able to attain an arbitrary level of performance. We show that the end-end system performance is inevitably limited by the quality of the semantic encoder, and that the proposed system can be semantically robust under a fading channel model. Specifically, comparing performance under Rician fading to that of the AWGN channel, we find that in many cases an increase of only 2-3dB in signal-to-noise ratio (SNR) is needed under the fading channel model to achieve similar semantic performance to that of the AWGN model.
        \item A practical semantic communication system is simulated using a trained convolutional neural network (CNN) as the semantic encoder. We show that the proposed approach can reduce the communication rate by over 99.9\% as compared to a more traditional system, while achieving similar semantic performance at extreme SNR values. Additionally, the proposed technique exhibits significant performance gains in the mid-high SNR regions as compared to an alternative semantic communication technique, decreasing the probability of semantic error by around 0.25. Finally, we show that even greater gains can be achieved by fine-tuning the semantic encoder to the specific task at hand.
    \end{itemize}
\end{itemize}

The rest of this paper is organized as follows. In section \ref{sec_semComWithConSpaces}, we present our view on semantic communication. We then define the key elements of conceptual spaces, and introduce our model of a semantic communication system. We focus on the notion of semantic distortion in section \ref{sec_semDistAnalysis} and provide a formal definition based on our model. With this definition in hand, we define the notion of a semantic error, and derive two upper bounds on the probability of a semantic error. In section \ref{sec_knowDrivenCom}, we examine some extensions of our approach. First, we show how our approach can lay the knowledge-driven framework for higher-level reasoning-driven systems, and then we illustrate how the notion of context can be incorporated into the conceptual space model. We begin section \ref{sec_experiments} by first describing the methods used to test our proposed approach, and follow with our experimental results and discussion. Section \ref{sec_conclusion} concludes the paper.

\section{Foundations: Semantic Communication with Conceptual Spaces}
\label{sec_semComWithConSpaces}

In this section, we present our view on the meaning of ``semantic communication.'' We then provide the necessary background on the theory of conceptual spaces, formally defining core elements and laying the groundwork for later analysis. Finally, we provide our framework of semantic communication built on this groundwork.

\subsection{What is Semantic Communication?}
\label{subsec_whatIsSemCom}

\begin{wrapfigure}{l}{0.28\textwidth}
    \centering
    \includegraphics[width = 0.27\textwidth]{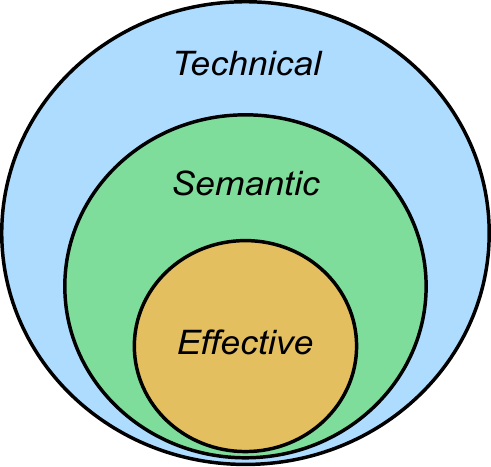}
    \caption{Hierarchy of communication goals}
    \label{fig_goalHierarchy}
\end{wrapfigure}

Lately, semantic communication has been a popular term used to describe a wide variety of approaches, and thus we will provide some clarification of what we mean when using this term here. First consider the hierarchy of communication goals shown in Figure \ref{fig_goalHierarchy}. Note that all communication must be technical; some kind of signal must be sent for communication to take place. A subset of these communication goals will also be semantic, where the sender wants to convey some meaning to the listener. An even smaller subset will also be effective, where the sender has a desired action they would like the listener to take. 

For example, suppose the sender transmits the text ``The oven is in the kitchen'' to the listener. If the goal is purely technical, then accurate transmission of each character is the sole objective. However, perhaps the sender is attempting to convey the layout of their home to the listener; then the meaning of the words is the focus, e.g., the listener should understand that the oven is in the kitchen and not the bedroom. But perhaps the sender wants the listener to turn the oven on; now the goal is effective as well.

When we refer to semantic communication, we mean communication problems where the goal falls into the semantic level of this hierarchy. Therefore, we do not mean using semantics to exactly reproduce the original data, as this falls into the technical level. Likewise, we do not mean goal-oriented or task-oriented communication, which sits at the effective level. In this work, we are concerned with problems where the primary goal is to convey some meaning. Thus, how meaning is defined and quantified is of prime importance.

\subsection{A Geometric View of Semantics}
\label{subsec_geometrySemantics}

\begin{wrapfigure}{r}{0.34\textwidth}
    \centering
    \includegraphics[width = 0.33\textwidth]{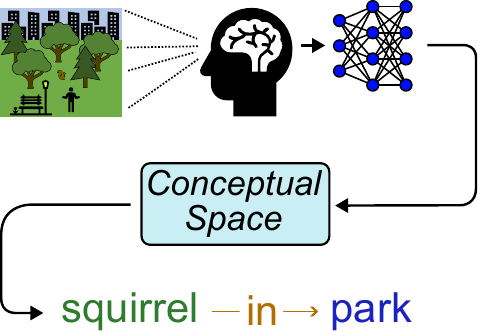}
    \caption{Visualization of the different levels of cognitive representation: associationism (top), conceptual (middle), and symbolic (bottom)}
    \label{fig_cogRepLevels}
\end{wrapfigure}

In \cite{2000_gardenfors_conceptSpace} Peter G\"ardenfors describes his theory of conceptual spaces, which is proposed as a cognitive model for how ideas and thoughts take shape in the human mind. There, G\"ardenfors defines three levels of cognitive representations. The \textit{associationism} level models cognitive representations as associations between different elements. He names \textit{connectionism} as a special case, where these associations are modeled by artificial neuron networks. In contrast, the \textit{symbolic} level uses discrete symbols as the primary cognitive representation and treats cognition as symbol manipulation. These two representations can be seen in the approaches to semantic communication described in subsection \ref{subsec_rel_work}. 

G\"ardenfors proposes a third level of representation, termed the \textit{conceptual} level, which acts as a bridge between these two. This kind of representation is ``based on geometrical structures, rather than symbols or connections among neurons'' \cite[p. 2]{2000_gardenfors_conceptSpace}. Rather than oppose each other, these different levels come together to form human thought. This idea is aligned with the recent interest in \textit{neurosymbolic AI} \cite{2023_garcez_neurosymbolicAI3rdWave}, and is shown in Figure \ref{fig_cogRepLevels}. The elements of the conceptual level of representation are originally described somewhat qualitatively \cite{2000_gardenfors_conceptSpace}, so we will now provide formal definitions for each of them.

The basic building block of a conceptual space is a \textit{quality dimension}, and is used to quantify some ``quality'' of an object or idea in some domain. We formalize this notion as a set.

\begin{defn}[Quality Dimension] \label{def_QualDim}
    A quality dimension is a set of scalar values quantifying some quality of an idea or object. We denote a quality dimension by $\mathcal{Q}$, where specific values along a quality dimension are denoted by $q$, such that $q \in \mathcal{Q}$.
\end{defn}

Quality dimensions are organized into \textit{domains}. Quality dimensions that share a domain are said to be \textit{integral}; G\"ardenfors describes integral quality dimensions as being those such that ``one cannot assign an object a value on one dimension without giving it a value on the other.'' \cite[p. 22]{2014_gardenfors_semanticTheory}. Thus, we can use quality dimensions to build a set representing the domain.

\begin{defn}[Domain] \label{def_Domain}
    A domain $\mathcal{D}$ is a set that is constructed from the Cartesian product of integral quality dimensions, i.e., $\mathcal{D} = \bigtimes_{k=1}^K Q_k$.  A point within a domain is specified by a column vector of quality values $\textbf{d} = \begin{pmatrix} q_1 & q_2 & \cdots & q_N \end{pmatrix}'$. 
\end{defn}

We use $'$ to denote the transpose operator. With a slight abuse of terminology, we refer to $K$ as the dimensionality of domain $\mathcal{D}$. Furthermore, we can think of ideas and objects as having \textit{properties}. These properties are with respect to a single domain.

\begin{defn}[Property] \label{def_Property}
    A property $\mathcal{P}$ of a single domain $\mathcal{D}$ is defined as a convex subset of that domain, i.e., $\mathcal{P} \subset \mathcal{D}$ where, for $\lambda \in [0,1]$ and any $\textbf{p}_1, \textbf{p}_2 \in \mathcal{P}$, $\lambda \textbf{p}_1 + (1-\lambda) \textbf{p}_2 \in \mathcal{P}$.
\end{defn}

The intuition underlying the convexity requirement is straightforward; if two objects possess a property, objects that are conceptually ``in between'' these two objects should also possess the same property. Allowing for multiple domains gives rise to a \textit{conceptual space}.

\begin{defn}[Conceptual Space] \label{def_ConSpace}
    A conceptual space $\mathcal{Z}$ is a set which is formed from the Cartesian product of $M \geq 1$ domains, i.e. $\mathcal{Z} = \bigtimes_{m=1}^M \mathcal{D}_m$. An idea or object is uniquely specified within the conceptual space by providing complete coordinates within every domain, i.e., $\textbf{z} = \begin{pmatrix} \textbf{d}_1' & \textbf{d}_2' & \cdots & \textbf{d}_M' \end{pmatrix}'$.
\end{defn}

Finally, we need to define a structure corresponding to the notion of a \textit{concept}. G\"ardenfors describes a concept as a collection of regions across domains throughout the conceptual space, as well as ``correlations between the regions from different domains'' \cite[p. 24]{2014_gardenfors_semanticTheory}. We will drop idea of correlation for now.

\begin{defn}[Concept] \label{def_Concept}
    A concept $\mathcal{C}$ is a region within the conceptual space spanning one or more domains. If contained within a single domain, we have $\mathcal{C} \subset \mathcal{D}$, and if the concept spans across $N$ domains, we have $\mathcal{C} \subset \bigtimes_{n=1}^N \mathcal{D}_n$. 
\end{defn}
A particular example of a concept is the Cartesian product of $N$ properties, $\mathcal{C} = \bigtimes_{n=1}^N \mathcal{P}_n$. We now present a simple example to clarify these definitions.

\begin{ex}[Colors and shapes] \label{ex_colorsShapes}
Consider the domain of colors. Studies indicate that humans perceive three dimensions regarding color: hue ($\mathcal{Q}_1$) , saturation ($\mathcal{Q}_2$), and brightness ($\mathcal{Q}_3$) \cite{2000_gardenfors_conceptSpace}. These are integral \textit{quality dimensions} for colors; a color cannot have a hue without also assigning it a saturation value. Thus, these dimensions form a \textit{domain}, which is shown in Figure \ref{fig_colorDomain}. Note that the hue dimension is circular; this will have certain implications when defining semantic distortion. Also, the dependence between the brightness and saturation dimensions that determines the spindle-like shape requires an additional condition beyond the simple Cartesian product, which would otherwise result in a cylinder. Such extra conditions can easily be incorporated into the prior definitions to extend the theory. 

Given the domain shown in Figure \ref{fig_colorDomain}, an example property might be \textit{light red}. This property $\mathcal{P}_1$ is the convex region in the top-half of the spindle with hue values in the range corresponding to red. Now consider the domain of \textit{regular polygons}, having one discrete dimension quantifying the number of sides of the polygon, i.e., $\mathcal{D} = \mathcal{Q}_4$, where $q_4 = 3$ corresponds to triangles, $q_4 = 4$ corresponds to squares, and so on. Furthermore, define the singleton set $\{4\}$ as the \textit{square} property $\mathcal{P}_2$. Then we can consider the Cartesian product of these two domains as the \textit{color-regular polygon conceptual space}, and the Cartesian product of $\mathcal{P}_1$ and $\mathcal{P}_2$ as the concept of a \textit{light red square}.
\end{ex}

\begin{wrapfigure}{l}{0.31\textwidth}
  \begin{center}
    \includegraphics[width=0.3\textwidth]{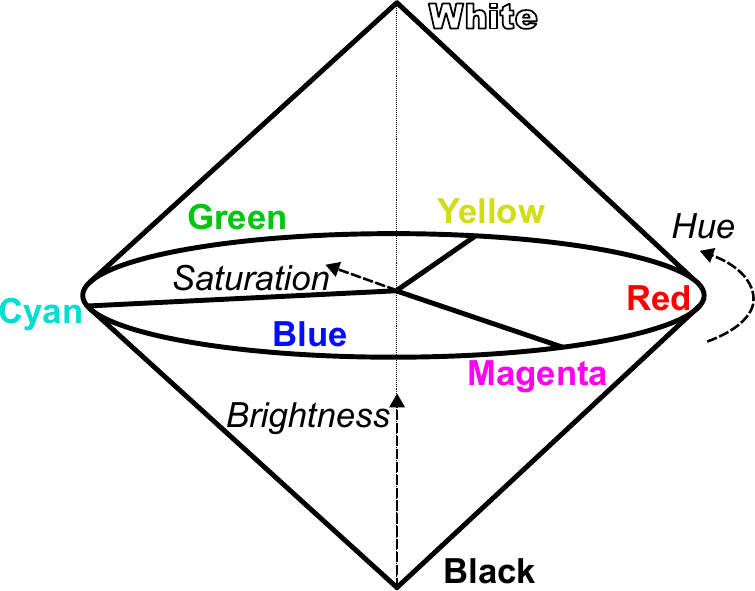}
  \end{center}
  \caption{\label{fig_colorDomain} Geometry of the domain from Example \ref{ex_colorsShapes}}
\end{wrapfigure}

Example \ref{ex_colorsShapes} shows how a simple conceptual space model can be designed to represent meaning. In the next subsection, we provide our general framework for semantic communication utilizing conceptual spaces for knowledge representation.

\subsection{Toward Semantic Communication}
\label{subsec_towardSemCom}

Figure \ref{fig_semComDiagram} provides a general block diagram of a framework for knowledge-driven semantic communication.
\begin{wrapfigure}{r}{0.39\textwidth}
  \begin{center}
    \includegraphics[width=0.382\textwidth]{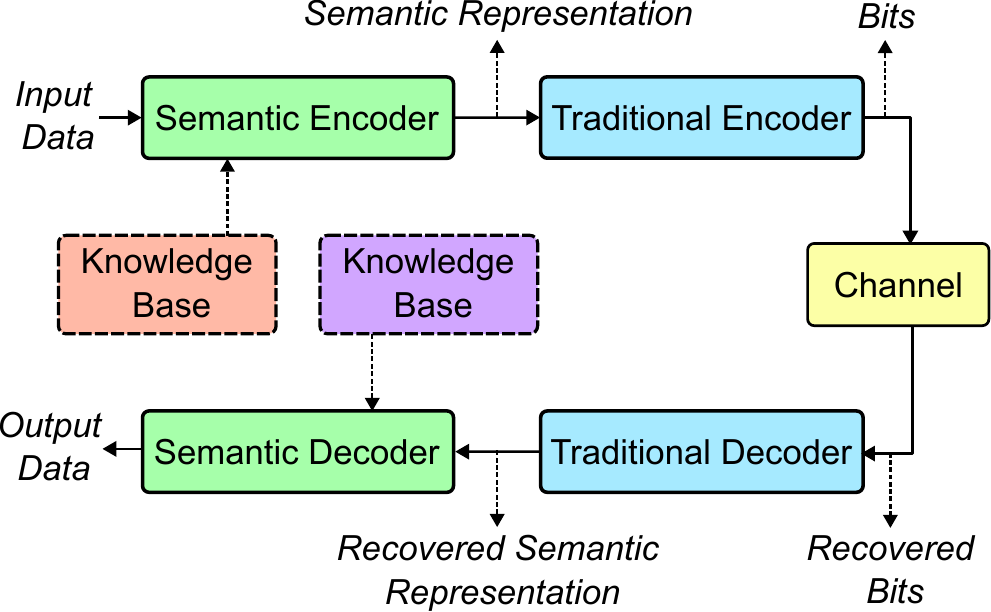}
  \end{center}
  \caption{\label{fig_semComDiagram} Block diagram of semantic communication}
\end{wrapfigure}
The proposed framework builds on existing communication paradigms with the addition of a semantic encoder and a semantic decoder. The primary task of the semantic encoder is to take some source data as an input and, according to its knowledge base, output a semantic representation that captures the meaning of the data. At the receiver, the semantic decoder is tasked with taking in a recovered (possibly corrupted) semantic representation and, according to its knowledge base, output some data.

Formally, the semantic encoder and decoder can be represented by functions. Let the input to the system be a vector $\textbf{x}$. Letting the semantic encoder with knowledge base $\mathcal{K}$ be denoted by the function $e_\mathcal{K}$, we have
\begin{equation} \label{eq_generalSemEnc}
    e_\mathcal{K}(\textbf{x}) = \textbf{z},
\end{equation}
where $\textbf{z}$ is the semantic representation for data $\textbf{x}$. Let the traditional encoder, channel, and traditional decoder be represented by functions $f$, $h$, and $g$, respectively. We can represent the traditional portion of the system as a composition of these functions
\begin{equation} \label{eq_tradFuncs}
    \hat{\textbf{z}} = g(h(f(\textbf{z}))).
\end{equation}
Finally, we represent the semantic decoder with knowledge base $\mathcal{L}$ as a function $d_\mathcal{L}$, and we have
\begin{equation} \label{eq_generalSemDec}
    d_\mathcal{L}(\hat{\textbf{z}}) = \textbf{u},
\end{equation}
where $\textbf{u}$ is the final output of the system, which can be some goal-driven transformation of $\hat{\textbf{z}}$.

We take the knowledge bases at the transmitter and the receiver to be conceptual space models. Moreover, for simplicity we assume that the transmitter and receiver share the same conceptual space $\mathcal{Z}$\footnote{Semantic communication with mismatched conceptual spaces at the transmitter and receiver is an interesting problem that is out of the scope of this paper, and will be considered in future work.}. Thus, we rewrite (\ref{eq_generalSemEnc}) and (\ref{eq_generalSemDec}) as
\begin{equation} \label{eq_semEnc}
    e_\mathcal{Z}(\textbf{x}) = \textbf{z},
\end{equation}
and
\begin{equation} \label{eq_semDec}
    d_\mathcal{Z}(\hat{\textbf{z}}) = \textbf{u}.
\end{equation}

The semantic decoder function (\ref{eq_semDec}) is a general function that represents a mapping to some goal-based output $\textbf{u}$. As our focus is on semantic communication, we consider a particular semantic decoder. Let the primary goal of communication be to successfully convey a \textit{concept} to the receiver, i.e., to convey meaning. Assuming that a finite number of concepts are defined over the conceptual space $\mathcal{Z}$, define a set $\mathcal{J}$ as the \textit{index} set of these concepts:
\begin{equation} \label{eq_indexSet}
    \mathcal{J} = \{1, 2, \ldots, J\}
\end{equation}
where $J$ is the total number of concepts defined over $\mathcal{Z}$, and $\jmath \in \mathcal{J}$ corresponds to concept $\mathcal{C}_\jmath$. Then we can consider the semantic decoder as a function $d_\mathcal{Z}: \mathcal{Z} \rightarrow \mathcal{J}$ defined as
\begin{equation} \label{eq_semDecIndex}
    d_\mathcal{Z}(\hat{\textbf{z}}) = \hat{\jmath}, \quad \hat{\jmath} \in \mathcal{J}.
\end{equation}
In the following section, we use these terms to define the notion of a semantic error.

\section{Semantic Distortion and Error Analysis}
\label{sec_semDistAnalysis}

With the conceptual space-based framework in place, we now turn to the problem of quantifying semantic distortion and defining semantic errors. In this section, we will show that using conceptual spaces to model knowledge leads to a notion of semantic distortion that is natural, intuitive, and interpretable. Furthermore, we use this notion to derive bounds of the probability of semantic error. We close this section by discussing some of the practical implications of this theory.

\subsection{Semantic Distortion in a Conceptual Space}
\label{subsec_semDistConSpace}

As we have defined our model of knowledge representation in terms of geometric structures, it is natural to think of semantic distortion as a \textit{distance measure}. Doing so gives the ability to compare the meanings of different objects and ideas. First, consider the distance between points within a single domain. For the $m$th domain $\mathcal{D}_m$, we define this distance as a function mapping two points in the domain to the non-negative real line, i.e. $\delta_m: \mathcal{D}_m \times \mathcal{D}_m \rightarrow \mathbb{R}_+$. While any general mapping can be considered, we are interested in functions that satisfy the three conditions of a metric, namely \textit{non-negativity}, \textit{symmetry}, and the \textit{triangle inequality}. Respectively,
\begin{align*}
    \delta_m(\textbf{d}_1, \textbf{d}_2) &\geq 0,\\
    \delta_m(\textbf{d}_1, \textbf{d}_2) &= \delta_m(\textbf{d}_2, \textbf{d}_1),\\
    \delta_m(\textbf{d}_1, \textbf{d}_3) &\leq \delta_m(\textbf{d}_1, \textbf{d}_2) + \delta_m(\textbf{d}_2, \textbf{d}_3).
\end{align*}

Given $\delta_m$ for each of the $M$ domains, we can now define semantic distortion.

\begin{defn}[Semantic Distortion]
    Semantic distortion is a function mapping two points in the conceptual space $\mathcal{Z}$ to the non-negative real line $\delta: \mathcal{Z} \times \mathcal{Z} \rightarrow \mathbb{R}_+$ given by
    \begin{equation} \label{eq_semanticDistortion}
        \delta(\textbf{z}_1, \textbf{z}_2) = \sum_{m=1}^M \delta_m(\textbf{z}_1, \textbf{z}_2)
    \end{equation}
    where $\delta_m(\textbf{z}_1, \textbf{z}_2) = \delta_m(\textbf{d}_1, \textbf{d}_2)$ is computed using the coordinates in $\textbf{z}_1, \textbf{z}_2$ corresponding to domain $\mathcal{D}_m$.
\end{defn}

\begin{ex}[Distance in the color domain]
    Recall the color domain of Example 1. To determine a distance function for this domain, we must take into account its specific geometry. If we let the hue dimension take values in the interval $[0,1]$, we cannot use absolute or squared difference to represent distance, since we need to capture the circularity. One way to represent distance on this dimension is with the expression
    \begin{equation} \label{eq_hueDist}
        \min(\vert h-\hat{h} \vert, \vert 1 - (h-\hat{h})\vert),
    \end{equation} \label{hueDistApprox}
    where $h$ represents the hue value. To avoid taking a minimum, (\ref{eq_hueDist}) can be approximated as
    \begin{equation} \label{eq_hueDistApprox}
        \gamma(h,\hat{h}) = -\frac{1}{\rho} \ln\left( \frac{1}{2}e^{-\rho \vert h - \hat{h}\vert} + \frac{1}{2} e^{-\rho(1-\vert h-\hat{h} \vert)} \right),
    \end{equation}
    with parameter $\rho > 0$. Since the brightness and saturation dimensions are linear, we can let them take values in the interval $[0,1]$ and use the squared difference to represent distance. Then we define distance within the domain as a mean squared error-like function given by
    \begin{equation}
        \delta_\text{color}(\textbf{d}, \hat{\textbf{d}}) = \frac{1}{3} \left( (s - \hat{s})^2 + (b - \hat{b})^2 + \gamma(h,\hat{h})^2 \right),
    \end{equation}
    where $s$ and $b$ are the saturation and brightness values, respectively, and $\textbf{d} = \begin{pmatrix} h & s & b \end{pmatrix}'$.
\end{ex}

Finally, we note that if each of the domain-specific distance functions satisfy the metric conditions, then the semantic distortion function given by (\ref{eq_semanticDistortion}) will also be a metric. Next, we'll use this observation to derive bounds on the probability of semantic error under some assumptions.

\subsection{Bounding the Probability of a Semantic Error}
\label{subsec_probBounds}

In traditional information theory, an error is characterized by the reception of an incorrect symbol. As discussed toward the end of section \ref{subsec_towardSemCom}, we are interested in communication when the primary goal is to convey a concept. Thus, using the notations of (\ref{eq_indexSet}) and (\ref{eq_semDecIndex}), we can define the notion of a \textit{semantic error}.
\begin{defn}[Semantic Error] \label{def_semError}
    A semantic error occurs when the concept decoded by the receiver does not match the concept that is to be conveyed, i.e.
    \begin{equation}
        d_\mathcal{Z}(\hat{\textbf{z}}) = \hat{\jmath} \neq \jmath^*,
    \end{equation}
    where $\jmath^*$ is the index of the true concept $\mathcal{C}_{\jmath^*}$ to be communicated.
\end{defn}

This definition of a semantic error is rather intuitive. If we take the meaning of an idea or object to be represented by its corresponding concept in a conceptual space, the \textit{successful conveyance of meaning} boils down to the successful communication of these concepts. This also matches the intuition that a syntactic error does not necessarily induce a semantic error. Perhaps the technical communication of $\textbf{z}$ contains errors, such that $\hat{\textbf{z}} \neq \textbf{z}$, i.e., the conceptual space coordinates obtained by the receiver have \textit{shifted} to a new location within the space. Since concepts are represented as regions, it is possible that $\hat{\textbf{z}}$ will be decoded as the correct concept.

To derive formal bounds on the probability of semantic error, we will make a few assumptions:
\begin{enumerate}
    \item[A1.] Concepts are formed from the Cartesian products of properties. Thus, concepts are \textit{convex} regions within the space, and each concept $\mathcal{C}_\jmath$ has a \textit{concept prototype} $\textbf{z}_\jmath$.
    \item[A2.] Semantic distortion $\delta: \mathcal{Z} \times \mathcal{Z} \rightarrow \mathbb{R}_+$ is a metric.
    \item[A3.] The semantic decoder is a minimum-distance decoder: $d_\mathcal{Z}(\hat{\textbf{z}}) = \argmin_{\jmath \in \mathcal{J}} \delta( \textbf{z}_\jmath, \hat{\textbf{z}}).$
\end{enumerate}

\begin{wrapfigure}{l}{0.33\textwidth}
  \begin{center}
    \includegraphics[width=0.32\textwidth]{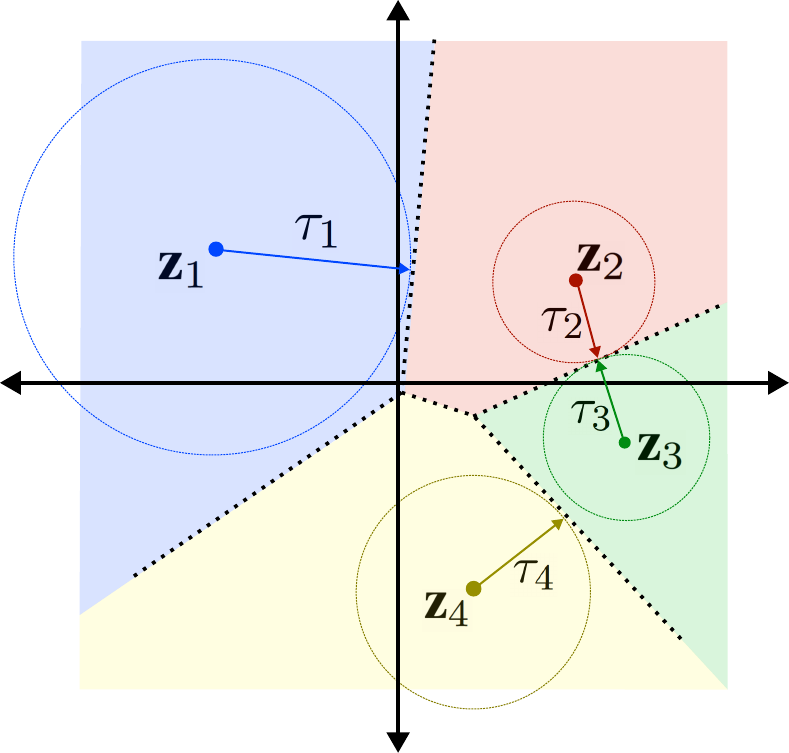}
  \end{center}
  \caption{\label{fig_voronoiTess} Example Voronoi tessellation with four concept prototypes on $\mathbb{R}^2$}
\end{wrapfigure}

Assumption A1 is intuitive, as a concept can often be expressed as a combination of properties, e.g., the concept of a ``red square'' is a combination of the properties ``red'' and ``square.'' Since properties are convex regions by definition, we can meaningfully compute a prototype point $\textbf{z}_\jmath$ as the centroid of these convex regions.  

Assumption A2 ensures that the chosen distortion function behaves like a true measure of distance. Given these first two assumptions, we can obtain a \textit{Voronoi tessellation} of the conceptual space \cite{2000_gardenfors_conceptSpace}. A Voronoi tessellation partitions a geometric space into convex regions, where the region of the tessellation corresponding to a prototype point $\textbf{z}_\jmath$ is defined as all points $\textbf{z}$ such that $\delta(\textbf{z}, \textbf{z}_\jmath) \leq \delta(\textbf{z}, \textbf{z}_j)$ for all $j \in \mathcal{J}$ where $j \neq \jmath$. Intuitively, it is the region of all points closer to the prototype of concept $\mathcal{C}_\jmath$ than any other prototype. An example is provided in Figure \ref{fig_voronoiTess}.

These first two assumptions lead naturally to a minimum-distance decoding scheme, expressed as the final assumption A3. Under this scheme, a concept will be decoded correctly if the received semantic representation $\hat{\textbf{z}}$ lies within the Voronoi region of the prototype point $\textbf{z}_{\jmath^*}$. Conversely, a semantic error will occur when $\hat{\textbf{z}}$ lies \textit{outside} this region. Let the Voronoi region corresponding to concept $\mathcal{C}_\jmath$ (equivalently, prototype $\textbf{z}_\jmath$) be denoted by $\mathcal{V}_\jmath$. Then we can express the probability of semantic error as in Definition \ref{def_semError} as
\begin{equation}
    P( \hat{\jmath} \neq \jmath^* ) = P( \hat{\textbf{z}} \notin \mathcal{V}_{\jmath^*} ).
\end{equation}

To obtain a bound on this probability, first consider the value $\tau_\jmath$ which we define as the solution to
\begin{align}
    \begin{split}
    \max_{\textbf{z} \in \mathcal{Z}} & \quad \delta( \textbf{z}, \textbf{z}_{\jmath} )\\
    \text{s.t.} & \quad \delta(\textbf{z},  \textbf{z}_\jmath) \leq \delta(\textbf{z}, \textbf{z}_j): \: j \in \mathcal{J}, \: j \neq \jmath.
    \end{split}
    \label{eq_tauOpt}
\end{align}
Intuitively, $\tau_\jmath$ is the radius of the largest sphere centered at $\textbf{z}_\jmath$ and inscribed in the Voronoi region $\mathcal{V}_\jmath$; this notion is illustrated in Figure \ref{fig_voronoiTess}. 

Next, consider the two potential sources of semantic distortion. Recall, the semantic encoder maps data $\textbf{x}$ to a semantic representation $\textbf{z}$. An imperfect encoder might introduce some level of semantic distortion $\delta(\textbf{z}_{\jmath^*}, \textbf{z})$. Moreover, syntactic errors may introduce some distortion $\delta(\textbf{z}, \hat{\textbf{z}})$. In general, the end-to-end distortion will be a function of these two values. Under assumption A2 that $\delta$ is a metric, the triangle inequality holds and we have
\begin{equation}
    \delta( \textbf{z}_{\jmath^*}, \hat{\textbf{z}}) \leq \delta(\textbf{z}_{\jmath^*}, \textbf{z}) + \delta(\textbf{z}, \hat{\textbf{z}})
\end{equation}

We can now derive an upper bound on the probability of semantic error.
\begin{lem} \label{lem_withPriors}
    Let $\alpha_\jmath$ denote the prior probability that concept $\mathcal{C}_\jmath$ is chosen as the true concept $\mathcal{C}_{\jmath^*}$. Then the probability of semantic error $\hat{\jmath} \neq \jmath^*$ has the upper bound
    \begin{equation} \label{eq_priorBound}
        P(\hat{\jmath} \neq \jmath^*) \leq \sum_{\jmath \in \mathcal{J}} \alpha_\jmath P\left(\delta(\textbf{z}_{\jmath}, \textbf{z}) + \delta(\textbf{z}, \hat{\textbf{z}}) > \tau_\jmath\right).
    \end{equation}
\end{lem}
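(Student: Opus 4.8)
The plan is to peel off the randomness of the true concept with the law of total probability, reduce the conditional error event for each concept to a simple distance condition on the received point $\hat{\textbf{z}}$ using the inscribed-sphere meaning of $\tau_\jmath$, and then split that distance into an encoder contribution and a channel contribution via the metric triangle inequality.

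First I would write $\alpha_\jmath = P(\jmath^* = \jmath)$ and condition on the true concept, obtaining $P(\hat{\jmath}\neq\jmath^*) = \sum_{\jmath\in\mathcal{J}} \alpha_\jmath\, P(\hat{\jmath}\neq\jmath^*\mid\jmath^*=\jmath)$, so it suffices to bound each conditional term. Fixing $\jmath$ and conditioning on $\jmath^*=\jmath$: by assumption A3 the decoder returns $\argmin_{j\in\mathcal{J}}\delta(\textbf{z}_j,\hat{\textbf{z}})$, which by the definition of the Voronoi tessellation equals $\jmath$ precisely when $\hat{\textbf{z}}\in\mathcal{V}_\jmath$, so the conditional error event is $\{\hat{\textbf{z}}\notin\mathcal{V}_\jmath\}$. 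Next I would invoke the defining property of $\tau_\jmath$ --- that the ball $\{\textbf{z}: \delta(\textbf{z}_\jmath,\textbf{z})\le\tau_\jmath\}$ is the largest such ball contained in $\mathcal{V}_\jmath$ --- to get the inclusion $\{\hat{\textbf{z}}\notin\mathcal{V}_\jmath\}\subseteq\{\delta(\textbf{z}_\jmath,\hat{\textbf{z}})>\tau_\jmath\}$; monotonicity of probability then gives $P(\hat{\jmath}\neq\jmath^*\mid\jmath^*=\jmath)\le P(\delta(\textbf{z}_\jmath,\hat{\textbf{z}})>\tau_\jmath)$.

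It then remains to bring in the two sources of distortion. By assumption A2, $\delta$ is a metric, so the triangle inequality through the transmitted representation $\textbf{z}$ gives $\delta(\textbf{z}_\jmath,\hat{\textbf{z}})\le\delta(\textbf{z}_\jmath,\textbf{z})+\delta(\textbf{z},\hat{\textbf{z}})$; hence $\{\delta(\textbf{z}_\jmath,\hat{\textbf{z}})>\tau_\jmath\}\subseteq\{\delta(\textbf{z}_\jmath,\textbf{z})+\delta(\textbf{z},\hat{\textbf{z}})>\tau_\jmath\}$, and one more use of monotonicity yields $P(\delta(\textbf{z}_\jmath,\hat{\textbf{z}})>\tau_\jmath)\le P(\delta(\textbf{z}_\jmath,\textbf{z})+\delta(\textbf{z},\hat{\textbf{z}})>\tau_\jmath)$. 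Substituting this into the total-probability decomposition --- and reading the probabilities on the right as implicitly conditioned on $\jmath^*=\jmath$, the conditioning being absorbed into the fact that $\textbf{z}$ and $\hat{\textbf{z}}$ are the representations generated from a source associated with $\mathcal{C}_\jmath$ --- produces exactly (\ref{eq_priorBound}).

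The step I expect to require the most care is the reduction $\{\hat{\textbf{z}}\notin\mathcal{V}_\jmath\}\subseteq\{\delta(\textbf{z}_\jmath,\hat{\textbf{z}})>\tau_\jmath\}$: it relies on $\tau_\jmath$ genuinely being an inscribed-sphere radius, so one should verify that this is what (\ref{eq_tauOpt}) encodes and handle the boundary of $\mathcal{V}_\jmath$ carefully (ties in the $\argmin$, and whether the extremal ball is open or closed) so that the strict inequality $>\tau_\jmath$ is the correct one. This in turn uses A1 and A2 to guarantee that the prototypes are well defined (centroids of convex properties) and that the Voronoi cells are genuine convex regions containing their prototypes, so that a nondegenerate inscribed ball exists. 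Beyond this point the argument is just conditioning and monotonicity of probability.
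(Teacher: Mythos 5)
Your proposal is correct and follows essentially the same route as the paper's proof: total probability over the true concept, reduction of the conditional error event to $\{\hat{\textbf{z}}\notin\mathcal{V}_\jmath\}\subseteq\{\delta(\textbf{z}_\jmath,\hat{\textbf{z}})>\tau_\jmath\}$ via the inscribed-sphere radius $\tau_\jmath$, and then the triangle inequality from A2. Your added caveats (tie-breaking on the Voronoi boundary and the implicit conditioning on $\jmath^*=\jmath$ in the right-hand probabilities) are points the paper passes over silently, so they strengthen rather than alter the argument.
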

\begin{proof}
    Recall that the Voronoi region corresponding to concept $\mathcal{C}_\jmath$ is denoted by $\mathcal{V}_\jmath$. Then
    \begin{align*}
        P(\text{sem. error}) 
        &= P( \hat{\jmath} \neq \jmath^* ) \\
        &= \sum_{\jmath \in \mathcal{J}} P( \hat{\jmath} \neq \jmath^* \mid \jmath^* = \jmath ) P( \jmath^* = \jmath)\\
        &= \sum_{\jmath \in \mathcal{J}} \alpha_\jmath P( \hat{\jmath} \notin \mathcal{V}_\jmath )\\
        &\leq \sum_{\jmath \in \mathcal{J}} \alpha_\jmath P( \delta(\hat{\textbf{z}},\textbf{z}_\jmath) > \tau_\jmath ) \\
        &\leq \sum_{\jmath \in \mathcal{J}} \alpha_\jmath P( \delta(\textbf{z}_\jmath, \textbf{z}) + \delta(\textbf{z},\hat{\textbf{z}}) > \tau_\jmath ) \qedhere
    \end{align*}
\end{proof}

Note that to compute (\ref{eq_priorBound}), the probability distribution over the set of concepts is required. If this distribution is unknown, with one additional assumption a looser bound can be derived.
\begin{lem} \label{lem_withoutPriors}
    Define $\tau = \min_{\jmath \in \mathcal{J}} \tau_\jmath$. If the distribution of the distortion introduced by the semantic encoder is independent of the concept $\mathcal{C}_\jmath$ being encoded, then the probability of semantic error has the upper bound
    \begin{equation} \label{eq_noPriorBound}
        P(\hat{\jmath} \neq \jmath^*) \leq P(\delta(\textbf{z}_{\jmath}, \textbf{z}) + \delta(\textbf{z}, \hat{\textbf{z}}) > \tau).
    \end{equation}
\end{lem}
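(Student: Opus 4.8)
The plan is to build directly on Lemma~\ref{lem_withPriors}: starting from its weighted-sum bound, I would replace each concept-specific inscribed radius $\tau_\jmath$ by the global minimum $\tau$, and then collapse the resulting sum using the independence hypothesis together with the fact that the priors $\alpha_\jmath$ form a probability distribution. Nothing beyond monotonicity of probability measures and normalization of the prior is required.

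Concretely, I would first write, from Lemma~\ref{lem_withPriors},
\[
P(\hat{\jmath} \neq \jmath^*) \leq \sum_{\jmath \in \mathcal{J}} \alpha_\jmath \, P\!\left(\delta(\textbf{z}_{\jmath}, \textbf{z}) + \delta(\textbf{z}, \hat{\textbf{z}}) > \tau_\jmath\right).
\]
Since $\tau = \min_{\jmath \in \mathcal{J}} \tau_\jmath$, we have $\tau \leq \tau_\jmath$ for every $\jmath \in \mathcal{J}$, so the event $\{\delta(\textbf{z}_{\jmath}, \textbf{z}) + \delta(\textbf{z}, \hat{\textbf{z}}) > \tau_\jmath\}$ is contained in $\{\delta(\textbf{z}_{\jmath}, \textbf{z}) + \delta(\textbf{z}, \hat{\textbf{z}}) > \tau\}$, and replacing $\tau_\jmath$ by $\tau$ only enlarges each summand:
\[
P(\hat{\jmath} \neq \jmath^*) \leq \sum_{\jmath \in \mathcal{J}} \alpha_\jmath \, P\!\left(\delta(\textbf{z}_{\jmath}, \textbf{z}) + \delta(\textbf{z}, \hat{\textbf{z}}) > \tau\right).
\]
Next I would invoke the hypothesis that the distortion introduced by the semantic encoder has the same distribution regardless of which concept is encoded; combined with the observation that the channel $h$ acts on the transmitted representation without reference to the concept index, the distortion $\delta(\textbf{z},\hat{\textbf{z}})$ carries no additional $\jmath$-dependence, so the law of the sum $\delta(\textbf{z}_{\jmath}, \textbf{z}) + \delta(\textbf{z}, \hat{\textbf{z}})$ does not depend on $\jmath$. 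Hence the probability appearing in the summand is a constant, which I pull outside the sum; using $\sum_{\jmath \in \mathcal{J}} \alpha_\jmath = 1$ then yields exactly (\ref{eq_noPriorBound}).

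The one step I would treat most carefully is this last one. The assumption as literally stated concerns only the encoder distortion $\delta(\textbf{z}_\jmath,\textbf{z})$, whereas the argument actually needs the joint law of $\bigl(\delta(\textbf{z}_\jmath,\textbf{z}),\,\delta(\textbf{z},\hat{\textbf{z}})\bigr)$ — and therefore the law of their sum — to be independent of $\jmath$. I would therefore make explicit that the channel is modeled as acting on $\textbf{z}$ alone, so no further dependence on the true concept enters through $\hat{\textbf{z}}$; with that in place the passage from marginal independence to independence of the sum is immediate, and the remainder of the proof is routine.
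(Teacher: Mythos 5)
Your proof is correct and follows essentially the same route as the paper's: start from the weighted-sum bound of Lemma~\ref{lem_withPriors}, loosen each $\tau_\jmath$ to $\tau = \min_\jmath \tau_\jmath$ by monotonicity, then use the independence hypothesis to pull the now $\jmath$-free probability out of the sum and invoke $\sum_\jmath \alpha_\jmath = 1$. Your closing caveat --- that one really needs the law of the \emph{sum} $\delta(\textbf{z}_\jmath,\textbf{z}) + \delta(\textbf{z},\hat{\textbf{z}})$, not just the encoder marginal, to be $\jmath$-independent --- is a point the paper's own proof passes over silently, so making it explicit is a small improvement rather than a deviation.
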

\begin{proof} Beginning in the same manner as in the proof of Lemma \ref{lem_withPriors}, we arrive at the same result. Moreover, since $\tau \leq \tau_\jmath$ for all $\jmath \in \mathcal{J}$, we have
    \begin{align*}
        P(
        &\text{semantic error})\\ 
        &\leq \sum_{\jmath \in \mathcal{J}} P( \delta(\textbf{z}_\jmath, \textbf{z}) + \delta(\textbf{z},\hat{\textbf{z}}) > \tau_\jmath ) P(\jmath^* = \jmath)\\
        &\leq \sum_{\jmath \in \mathcal{J}} P( \delta(\textbf{z}_\jmath, \textbf{z}) + \delta(\textbf{z},\hat{\textbf{z}}) > \tau ) P(\jmath^* = \jmath)
    \end{align*}
    Looking at the first probability term inside the summation, we see that the only term dependent on $\jmath$ is $\delta(\textbf{z}_\jmath, \textbf{z})$. By assumption, this random distortion is independent of the concept $\mathcal{C}_\jmath$ being encoded, and thus the entire probability term does not depend on $\jmath$. Therefore, we can write
    \begin{align*}
        P(
        &\text{semantic error})\\ 
        &\leq  P( \delta(\textbf{z}_\jmath, \textbf{z}) + \delta(\textbf{z},\hat{\textbf{z}}) > \tau ) \sum_{\jmath \in \mathcal{J}} P(\jmath^* = \jmath)\\
        &=  P( \delta(\textbf{z}_\jmath, \textbf{z}) + \delta(\textbf{z},\hat{\textbf{z}}) > \tau ) \qedhere
    \end{align*}
\end{proof}

\subsection{Practical Implications}
\label{subsec_semDistImps}

The bounds derived in Lemmas \ref{lem_withPriors} and \ref{lem_withoutPriors} can guide the design of robust semantic communication systems. For example, in (\ref{eq_priorBound}) we see that the bound is determined by four values, namely, the prior distribution of concepts $\alpha_\jmath$, the quality of semantic encoding $\delta(\textbf{z}_{\jmath}, \textbf{z})$, the quality of syntactic communication $\delta(\textbf{z}, \hat{\textbf{z}})$, and structure of the conceptual space $\tau_\jmath$. Taking the $\alpha_\jmath$'s to be fixed, this leaves three variables that can be tuned to achieve the desired performance. First, the conceptual space should be designed such that common concepts (i.e., large $\alpha_\jmath$) have a large $\tau_j$, decreasing the probability of a misunderstanding when communicating that concept. After this, the traditional portion of the communication system can be determined, fixing the distribution of $\delta(\textbf{z}, \hat{\textbf{z}})$. Finally, the semantic encoder can be designed to meet performance requirements, for instance by training a deep neural network to meet the maximum distortion with high probability. Conversely, one can fix the semantic encoder and relax the requirements of the syntactic communication to achieve maximum technical efficiency while maintaining semantic performance. We end this section with an illustrative example of this process.

\begin{ex}[Semantic system design] \label{ex_systemDesign}
    Suppose we want to design a semantic communication system to communicate three concepts, denoted $\mathcal{C}_1$, $\mathcal{C}_2$, and $\mathcal{C}_3$. The probabilities of these concepts are known and fixed\footnote{Note that all values and distributions in this example are arbitrary, and are only meant to illustrate the process of semantic communication system design.} at $\alpha_1 = 0.5$, $\alpha_2 = 0.25$, and $\alpha_3 = 0.25$. After the conceptual space model is designed, it is found that $\tau_1 = 3$, $\tau_2 = 2$, and $\tau_3 = 1$. For simplicity, suppose that the components of semantic distortion introduced by both the encoder and the syntactic communication are independent of the concepts, and that they can be modelled as exponential random variables, i.e., $\delta_{enc} \sim \text{exp}(\lambda_1)$ and $\delta_{trad} \sim \text{exp}(\lambda_2)$. Parameters $\lambda_1$ and $\lambda_2$ reflect the capabilities of the system components, e.g., a greater $\lambda$ value indicates that large distortion values are less likely. After designing the semantic encoder, it is found that $\lambda_1  = 2$. We would like to design the syntactic portion of the system (channel coding, modulation, etc.) such that the probability that a semantic error occurs is less than 0.05. First, note that the sum of distortions follows a hypoexponential distribution. Letting $\delta = \delta_{enc} + \delta_{trad}$, the PDF of this random variable is given by the expression
    \begin{equation}
        p_{\delta}(x) = \frac{2\lambda_2}{2-\lambda_2}\left( e^{-\lambda_2 x}- e^{-2x} \right), \:\: x \geq 0.
    \end{equation}
    Integrating this PDF, we compute the probability of this sum exceeding some value $\tau_\jmath$ as
    \begin{equation}
        P(\delta > \tau_\jmath) = \int_{\tau_\jmath}^{\infty}p_\delta(x)dx = \frac{2e^{-\lambda_2\tau_\jmath} + \lambda_2 e^{-2\tau_\jmath}}{2 - \lambda_2}.
    \end{equation}
    Substituting this into the right hand side of (\ref{eq_priorBound}), setting equal to 0.05, and solving we find that $\lambda_2 \approx 1.5$. Designing the syntactic portion of the system to meet this specification will ensure that the desired performance is achieved without wasted resources. For $\lambda_2 < 1.5$, the distortion will be too great to guarantee the desired performance. However, if $\lambda_2 > 1.5$, resources will be wasted on unnecessary technical accuracy beyond what is required for the semantic goal.
\end{ex}

\section{Going Further: Knowledge-Driven Semantic Communication}
\label{sec_knowDrivenCom}

In \cite{2022_chaccour_LessDataMoreKnowledge}, the authors outline a vision and general framework for \textit{reasoning-driven} semantic communication. In this section, we briefly show how our framework based on conceptual spaces naturally lends itself as a knowledge-driven foundation to this kind of higher-level framework, where the notion of a ``semantic language'' as in \cite{2022_chaccour_LessDataMoreKnowledge} is elegantly realized by the conceptual space model. We then demonstrate how semantic distortion can be generalized to handle the notion of context and conclude with a discussion of semantic ambiguity, semantic redundancy and scalability as they relate to our framework.




\subsection{Building a Semantic Language}
\label{subsec_semLanguage}

First, consider the following definition for a \textit{semantic language}.
\begin{defn} (Semantic Language \cite[Defn. 3]{2022_chaccour_LessDataMoreKnowledge}) \label{def_semLanguage}
    A semantic language $\mathcal{L} = (X_{l_i}, Z_i)$, is a dictionary (from a data structure perspective) that maps the learnable data points $X_{l,i}$ to their corresponding semantic representation $Z_i$, based on the identified semantic content elements $Y_i$.
\end{defn}

We take the idea of a concept (as in Definition \ref{def_Concept}) to be synonymous with the idea of a ``semantic content element'' in Definition \ref{def_semLanguage}. Moreover, the geometric underpinning of the conceptual space model inherently provides the semantic representations of such content elements, which we can take to be the coordinates of the prototype point $\textbf{z}_\jmath$ for concept $\mathcal{C}_\jmath$. Then we can define a conceptual space-based semantic language.
\begin{defn} [Conceptual Space-Based Semantic Language] \label{def_ConSpaceSemLanguage}
    $\mathcal{L} = (\textbf{x}_{\jmath}, \textbf{z}_{\jmath})$ is a dictionary that maps the learnable data points $\textbf{x}_{\jmath}$ of concept $\mathcal{C}_\jmath$ to the corresponding concept prototype $\textbf{z}_{\jmath}$.
\end{defn} 

Therefore, our proposed model of knowledge representation is fully compatible with this higher-level framework for reasoning-driven communication. More importantly, our approach can provide a formal foundation for the qualitatively-defined components of this framework, such as semantic content elements and semantic representations.  In \cite{2022_chaccour_LessDataMoreKnowledge}, it is stated that a semantic language should exhibit three qualities. We close this subsection by highlighting how the semantic language of Definition \ref{def_ConSpaceSemLanguage} meets these three criteria:
\begin{itemize}
    \item \textit{Minimalism}: By representing concepts within the raw data as coordinates within a conceptual space, the amount of data transmitted to convey the semantics can be immensely reduced. For example, in our previous work we showed that effectively transmitting image semantics with conceptual spaces can reduce the required data rate by over 99\% \cite{2023_wheeler_semComLetter}.
    \item \textit{Generalizability}: As the conceptual space representation truly sits at the semantic level, domains can be reused for different tasks and settings. For example, the color domain can be used to represent concepts involving color in any given task.
    \item \textit{Efficiency}: The geometric representation of semantics provided by conceptual spaces allows for efficient computation of otherwise difficult operations. Take the example of semantic similarity, which has been a particularly challenging quantity to assess. One recent approach is to use a pre-trained DNN model to assess the similarity of sentences, e.g., using BERT \cite{2022_guo_signalShapingSemComm}. Even without training, a network of this size can still require billions of operations per computation. With conceptual spaces, semantic similarity is captured as a simple distance function, from which meaning can be easily compared.
\end{itemize}

\subsection{Communicating with Context}
\label{subsec_context}

When first describing conceptual spaces, G\"ardenfors points out the importance of context. He primarily describes two methods of expressing context in the model, which are referred to as \textit{salience} and \textit{sensitivity}. Salience refers to the fact that in a given situation, some domains of a conceptual space may be more important or prominent than others. To this point, G\"ardenfors states ``the relative weight of the domains depends on the \textit{context} in which the concept is used'' \cite[p. 103]{2000_gardenfors_conceptSpace}. On sensitivity, he states that ``subjects can be trained to become sensitized to certain \textit{areas} of a dimension so that the perceived length of the area is increased'' \cite[p. 104]{2000_gardenfors_conceptSpace}. 

These two aspects of context can be mathematically realized by altering the semantic distortion function in (\ref{eq_semanticDistortion}). First, note that these two notions will alter the semantic distortion function in different ways. Salience seems to work \textit{across} domains, while sensitivity alters the space \textit{within} a single domain. Starting with salience, as initially proposed by G\"ardenfors we can capture this idea with a set of \textit{weights} corresponding to the domains,
\begin{equation}
    \mathcal{W} = \{w_1, w_2, \ldots, w_M\}, \quad \sum_{m=1}^M w_m = 1,
\end{equation}
which alters the semantic distortion function in the follow way:
\begin{equation} \label{eq_semDistWithSalience}
    \delta(\textbf{z}_1, \textbf{z}_2 \mid \mathcal{W}) = \sum_{m=1}^M w_m \delta_m(\textbf{z}_1, \textbf{z}_2)
\end{equation}
Given the communication context, greater distortion in highly-weighted domains will have a greater effect of the overall distortion. Additionally, it is easy to show that if the semantic distortion $\delta(\cdot)$ is a metric function, the salience-weighted semantic distortion $\delta(\cdot \mid \mathcal{W})$ is also a metric, and the results in the previous section still apply.

Furthermore, we can think of sensitivity as a transformation of a domain, i.e., some dimensions may be ``streched'' and some may be ``shrunk'' to reflect the sensitivity of the agent in the given context. We define a set of transformations corresponding to each of the domains in the space,
\begin{equation}
    \mathcal{T} = \{T_1, T_2, \ldots, T_M\},
\end{equation}
such that
\begin{equation}
    \tilde{\mathcal{D}}_m = T_m(\mathcal{D}_m), \quad \tilde{\delta}_m = \delta_m(\tilde{\mathcal{D}}_m).
\end{equation}
Transformation $T_m$ alters the geometry of the $m$th domain to match the sensitivity of the agent in the context, and thus defines a new function $\tilde{\delta}_m$ which maps the new domain to the non-negative real line. Substituting these new distortion functions into (\ref{eq_semDistWithSalience}) we have
\begin{equation} \label{eq_semDistContext}
    \delta(\textbf{z}_1, \textbf{z}_2 \mid \mathcal{W}, \mathcal{T}) = \sum_{m=1}^M w_m \tilde{\delta}_m(\textbf{z}_1, \textbf{z}_2),
\end{equation}
which is the new context-dependent distortion function. By incorporating these different aspects, we can obtain a more realistic representation of meaning, and begin to capture some of the more nuanced aspects of meaning that can enable more intelligent semantic communication.

\subsection{Ambiguity, Redundancy, and Scalability} \label{subsec_ambigRedunScale}

Two important concepts in the context of semantic communication are semantic \textit{ambiguity} and \textit{redundancy}. Semantic ambiguity results from the fact that some data can be associated with multiple meanings. For example, the word ``tree'' in the context of a nature hike brings to mind an entirely different picture than that of a decision ``tree'' in a computer science context. Conversely, semantic redundancy refers to the scenario in which multiple unique data possess a similar meaning, e.g., the words ``begin'' and ``commence''. Semantic ambiguity can often be alleviated with the inclusion of contextual information. This is what is done in (\ref{eq_semDistContext}), where the contextual information is used to alter the way that meaning is represented within the conceptual spaces framework. Regarding the semantic redundancy, this issue is implicitly handled by the conceptual spaces framework in that data with similar meaning will be mapped to similar points within the semantic space. Thus, the elimination of semantic redundancy become of a problem of designing the semantic encoder to identify redundant syntactic data and map them accordingly to the semantic space. Our framework, therfore, provides natural ways of handling both semantic ambiguity and redundancy.

One of the primary challenges faced when implementing our proposed approach is the design of the conceptual space itself. This is a non-trivial task that generally requires deep knowledge of the semantics related to the task at hand. In our previous work \cite{2023_wheeler_semComLetter} and in the experiments in the following section, we hand-craft conceptual spaces to achieve semantic communication. This approach to conceptual space design lacks scalability, and thus more efficient techniques will be required to enable practical use of this framework. One possible path toward addressing this problem is to utilize existing techniques in the field of \textit{feature learning} or \textit{feature representation} \cite{2013_bengio_featureLearning} in machine learning. In this context, the dimensions of the conceptual space could be considered as the features to be learned, where the higher-level structures (domains, concepts, etc.) are then built on top of the learned space. We plan to study how feature learning can be leveraged to address this critical challenge in future work.

\section{Experimental Results}
\label{sec_experiments}

To demonstrate our proposed approach, we simulate a metaverse-inspired communication problem. Lately, the idea of the metaverse has received great attention due to its potential to enhance digital experiences. Metaverse applications aim to provide immersive online experiences by utilizing technologies such as virtual reality (VR) and augmented reality (AR) \cite{2021_ning_surveyMetaverse, 2022_sun_surveyMetaverse}. To achieve true immersion, these applications potentially require massive amounts of data transmission. These intensive communication requirements, coupled with the expected growth of the metaverse \cite{2023_kshetri_economicsIndustrialMetaverse}, make metaverse-inspired communication a prime candidate for the benefits of semantic communication.

\subsection{Problem Definition and System Description}
\label{subsec_probDef}

In our experiments, we look at the problem of virtual reality exposure therapy (VRET), which is just one of the many examples in which the metaverse can be utilized in the context of healthcare \cite{2022_bansal_healthcareMetaverse}. VRET can be used to treat phobias by exposing the patient to fear-inducing stimuli in the virtual environment while being guided by a therapist, to gradually reduce fear of the specific stimulus \cite{2022_albakri_vretReview}. One of the most common phobias is the fear of heights (acrophobia) \cite{2018_eaton_specificPhobias}. Thus, we focus on this scenario and aim to use the tools developed in the previous sections to design and analyze a semantic communication system for a VRET application within the metaverse. Specifically, we look at the end-end patient-to-therapist communication link.

\begin{wrapfigure}{l}{0.46\textwidth}
  \begin{center}
    \includegraphics[width=0.45\textwidth]{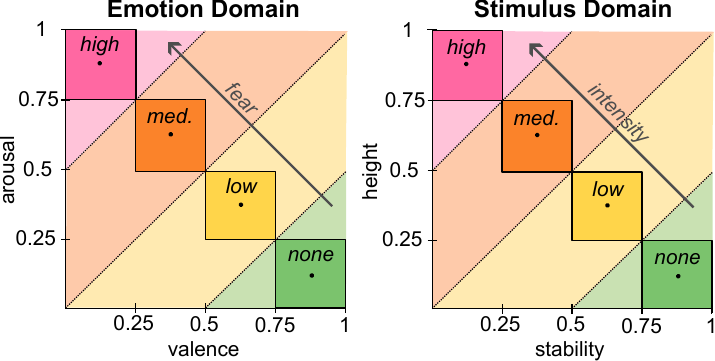}
  \end{center}
  \caption{\label{fig_vretCS} Conceptual space for using VRET to treat acrophobia. Properties, their corresponding prototype points and the Voronoi tesselation are also included.}
\end{wrapfigure}

The first step in our design is to build a conceptual space model to capture the relevant semantics. These relevant semantics are often tied to the overall goal of communication, and in the patient-to-therapist scenario, the goal is to provide a clear picture of the patients current emotional state to the therapist, as well as the intensity of the current stimulus in the VR environment. Thus, we define two domains, shown in Figure \ref{fig_vretCS}; namely, the emotion and stimulus domains.

We utilize valence and arousal dimensions \cite{2019_balan_fearLevelClassification} to construct the emotion domain. Valence refers to the degree to which a given emotion is positive or negative, while arousal refers to the intensity or level of activation of an emotion. Fear represents a low-valence and high-arousal emotion. We also adopt the different ranges used in \cite{2019_balan_fearLevelClassification} for fear-level classification to define properties within this domain, also shown in Figure \ref{fig_vretCS}. For the stimulus domain, with acrophobia in mind, we identify the dimensions of \textit{height} and \textit{stability} to characterize the \textit{intensity} of a stimulus. For example, a high-intensity stimulus would be climbing a tall ladder, while a low-intensity stimulus would be walking on a beach. For simplicity, properties are defined as degrees of intensity, analogous to the fear properties of the emotion domain.

As our conceptual space has two domains with linear dimensions, we can define the semantic distortion as the sum of the Euclidean distances in both domains, i.e., for $\textbf{z} = (v \:\:\: a \:\:\: h \:\:\: s)$,
\begin{equation} \label{eq_simsSemDist}
    \delta(\textbf{z}, \hat{\textbf{z}}) 
    = \left\Vert 
    \begin{pmatrix}
     v - \hat{v} \\ a - \hat{a}   
    \end{pmatrix}
    \right\Vert_2 
    + 
    \left\Vert 
    \begin{pmatrix}
     h - \hat{h} \\ s - \hat{s}   
    \end{pmatrix}
    \right\Vert_2
\end{equation}

Concepts are formed by taking the Cartesian product of convex regions within the two domains, and these concepts represent \textit{phobia levels}. For example, ``mild phobia'' is a concept obtained by combining ``mild fear'' with ``extreme intensity''. Similarly, a concept of ``extreme phobia'' can be realized through a combination of ``high fear'' and ``no intensity''. We define three concepts and their prototype points, which are given in Table \ref{tab_concepts}. We limit ourselves to concepts involving medium and high fear properties, due to these being by far the most common of our defined emotional properties present in the Aff-Wild2 dataset used for training \cite{2022_kollias_learningSyntheticData, 2019_kollias_deepAffectPrediction, 2019_balan_fearLevelClassification}.

\begin{wraptable}{r}{0.55\textwidth}
    \centering
    \small
    \begin{tabular}{|l|c|c|c|c|} 
        \hline
        \textit{Phobia Level} $\quad$ & \textbf{Valence} & \textbf{Arousal} & \textbf{Height} & \textbf{Stability} \\
        \hline
        \textbf{Mild} & 0.375 & 0.625 & 0.875 & 0.125 \\
        \hline
        \textbf{Moderate} & 0.250 & 0.750 & 0.500 & 0.500 \\
        \hline
        \textbf{Extreme} & 0.125 & 0.875 & 0.125 & 0.875 \\
        \hline
    \end{tabular}
    \caption{Concepts and Prototype Points $(\textbf{z}_{\jmath})$}
    \label{tab_concepts}
\end{wraptable}


As described in section \ref{sec_semComWithConSpaces}, our semantic communication goal is the accurate communication of these concepts. In our experiments, we simulate four systems to achieve this goal.

\subsubsection{Theoretical Semantic System}
\label{subsubsec_semSystemTheory}

First, we simulate semantic communication with a theoretical semantic encoder to examine the system performance with arbitrary encoder distortion. Here we model the dimensional error introduced by the semantic encoder as a multivariate normal (MVN) random vector, i.e.,
\begin{equation} \label{eq_semEncTheory}
    \textbf{z} = \textbf{z}_{\jmath^*} + \textbf{n},
\end{equation}
where $\textbf{n} \sim \mathcal{N}\left(\textbf{0}, \sigma_e^2 \textbf{I})\right)$ and $\textbf{I}$ denotes the identity matrix. Note that we can tune the parameter $\sigma_e$ to simulate varying levels of quality of the semantic encoder. The semantic representation $\textbf{z}$ is then quantized, modulated, and transmitted over the channel. At the receiver, demodulation and channel decoding is performed, and a minimum-distance semantic decoder is employed to make a decision on the transmitted concept.

\subsubsection{Practical Systems}
\label{subsubsec_practicalSystems}

We simulate three practical systems to demonstrate the performance of our proposed approach. One utilizes the conceptual space-based method of semantic communication described in previous sections, another uses the joint source channel coding (JSCC) technique for semantic communication, and the third is a traditional communication system that does not employ semantic communication. All three systems utilize a CNN architecture that is slightly modified from the one proposed in \cite{2019_handrich_simultaneousValenceArousal}, which is originally based on the YOLO architecture \cite{2016_redmon_YOLO}. The architecture of the CNN used for all three (termed BaseCNN) is shown in Figure \ref{fig_cnnArch}, and other details pertaining to the models are provided in Table \ref{tab_modelDetails}.

\begin{figure*}[t]
  \centering
  \includegraphics[width=0.95\textwidth]{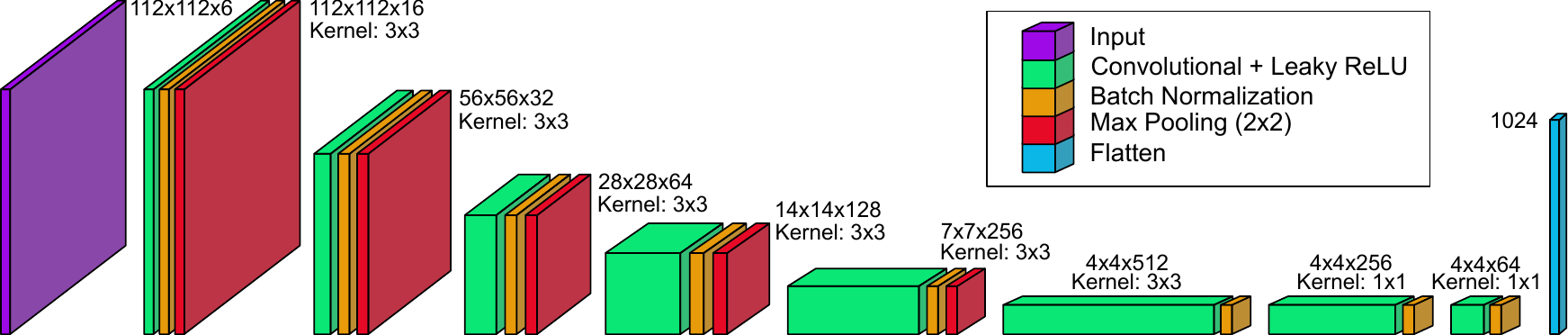}
  \caption{Visualization of the CNN (BaseCNN) architecture used in the three practical systems.}
  \label{fig_cnnArch}
\end{figure*}

\begin{table*}[t]
\centering
\renewcommand{\arraystretch}{0.6}
\begin{tabular}{|c|c|c|c|}
\cline{2-4}
\multicolumn{1}{c|}{} & \textbf{Conceptual Space-Based Semantic} & \textbf{End-End Semantic} & \textbf{Non-Semantic} \\
\hline
\multirow{6}{*}{\textit{Architecture}} & & BaseCNN & \\
 & & Dense(64), tanh activation & \\
 & BaseCNN & Binarization Layer & BaseCNN \\
 & Dense(4), sigmoid activation & 16-QAM Layer & Dense(3), softmax activation \\
 & & AWGN Layer & \\
 & & Dense (3), softmax activation & \\
\hline
\textit{Loss Function} & (\ref{eq_simsSemDist}) & \multicolumn{2}{c|}{Categorical Cross Entropy} \\
\hline
\textit{Optimizer} & \multicolumn{3}{c|}{Adam} \\
\hline
\textit{Learning Rate} & \multicolumn{3}{c|}{0.001} \\
\hline
\textit{Minibatch Size} & \multicolumn{3}{c|}{64} \\
\hline
\textit{Minibatches/Epoch} & \multicolumn{3}{c|}{500} \\
\hline
\textit{Total Epochs} & \multicolumn{3}{c|}{100} \\
\hline
\textit{Dropout Rate} & \multicolumn{3}{c|}{0.3} \\
\hline
\end{tabular}
\caption{Practical Systems: Models and Training Details}
\label{tab_modelDetails}
\end{table*}

To train the models, we utilized two datasets, each corresponding to one of the domains of the conceptual space. The first is the Aff-Wild2 dataset \cite{2022_kollias_learningSyntheticData, 2019_kollias_deepAffectPrediction}, which consists of 564 videos of around 2.8M frames with 554 subjects (326 of which are male and 228 female), all annotated with valence/arousal labels. For the stimulus domain, we created a dataset of 43 videos of around 500k frames of various stimuli corresponding to different levels of intensity and assigned each frame a height/stability label. Data samples were constructed by sampling a frame from each dataset, resizing each to a 112$\times$112$\times$3 RGB image, and combining the resulting images to form a 112$\times$112$\times$6 array which serves as the input to each CNN. A total of 450k of these combined images where used for the training dataset, while 50k were used for the validation dataset.

When training the conceptual-space based model, the label for a given input is the semantic representation \textbf{z}. For the other two models, the labels are one-hot encoded vectors specifying the present concept. For training the end-end semantic model, a uniformly random SNR ranging from -20dB to 20dB was selected for the AWGN layer for each data sample to generate the added noise, to avoid the need to retrain for each channel condition. For each model, dropout was applied during training after each convolutional block to mitigate overfitting. The models were created and trained using the keras API within the tensorflow Python package.

After the models were trained, the three systems were simulated as shown in the block diagrams in Figure \ref{fig_systemsDiagrams}. Specifically, the conceptual space-based system transmits semantic representations in a traditional manner and uses a minimum-distance decoder to obtain the concept. The end-end semantic system directly generates complex symbols, and uses a dense layer at the receiver to get the decision. The non-semantic system transmits both of the images over the channel and uses a classifier at the receiver to make a decision.

\begin{figure*}[t]
  \centering
  \includegraphics[width=0.88\textwidth]{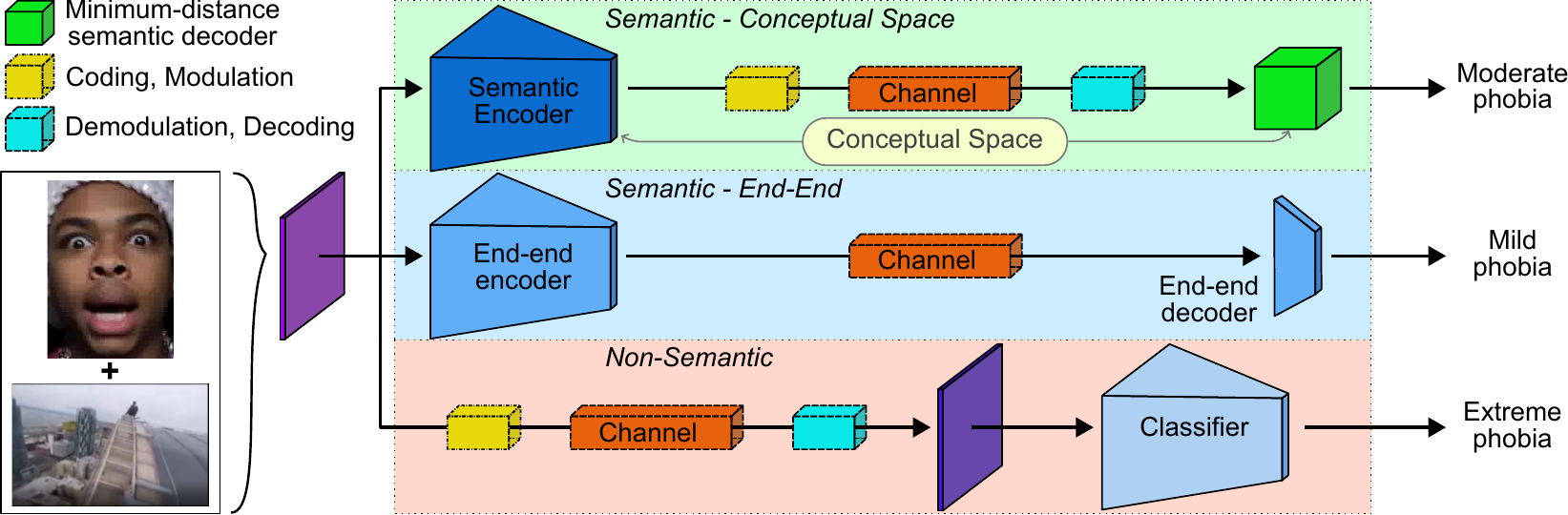}
  \caption{Diagrams of the three practical communication systems for VRET}
  \label{fig_systemsDiagrams}
\end{figure*}


\subsection{Traditional Communication Components}
\label{subsec_tradSystem}

For each of the systems, we simulate various aspects of the traditional portion of the system. Many future metaverse applications will likely utilize WiFi, and thus in our experiments we simulate the IEEE 802.11 standard \cite{2021_ieee802.11Standard}. We examine communication under BPSK, 16-QAM, and 256-QAM modulation. For the conceptual space-based semantic system, we employ rate 1/2 convolutional coding at the transmitter and Viterbi decoder at the receiver. Moreover, we perform experiments for both AWGN and Rician fading channel models. 


\begin{figure*}[t]
  \centering
  \includegraphics[width=1\textwidth]{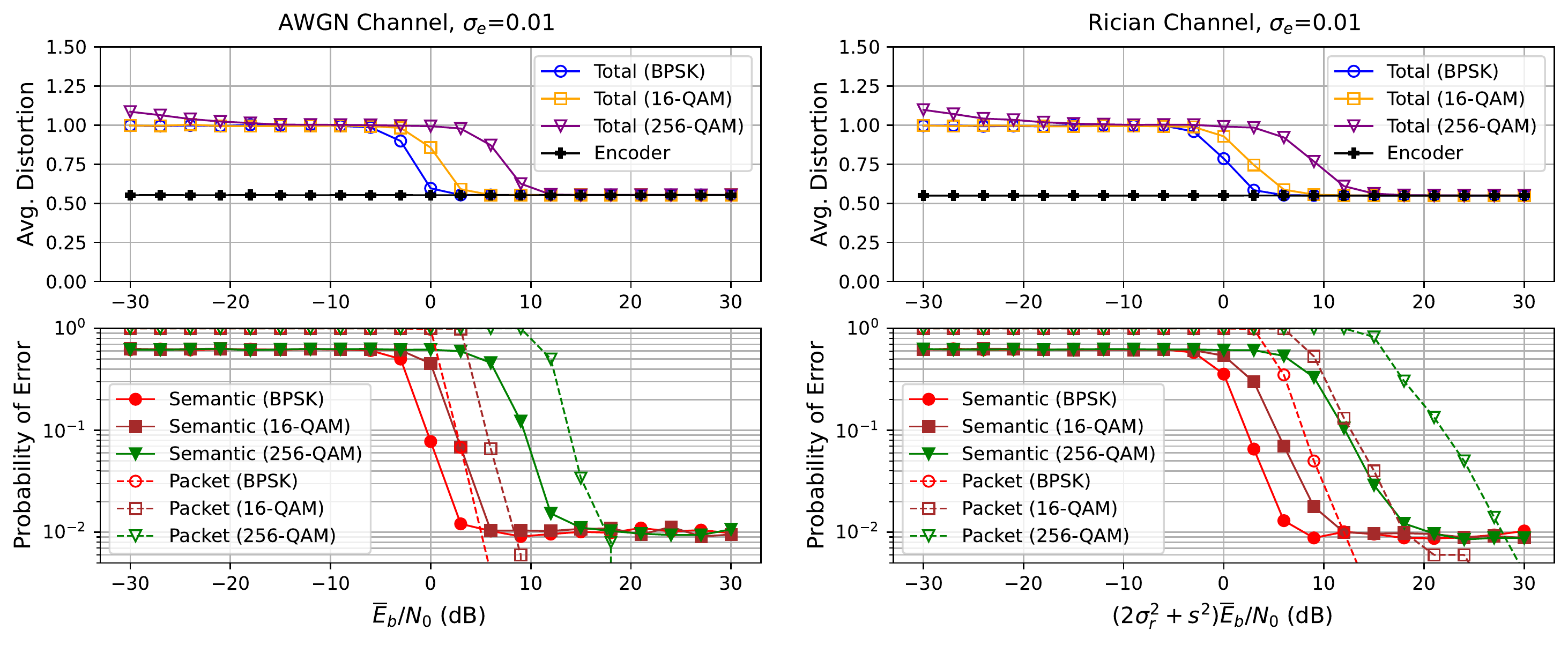}
  \caption{Plots of average semantic distortion (top) and probability of error (bottom) results for the semantic system with a theoretical semantic encoder under the AWGN channel (left) and Rician fading channel (right) plotted against average SNR}
  \label{fig_semTheoryResults}
\end{figure*}

\subsection{Results and Discussion}
\label{subsec_results}

Monte Carlo simulations are used to demonstrate the performance of the systems described above. Here we describe these experiments and the following results, and provide some discussion as to their implications.

\subsubsection{Theoretical Encoder}
\label{subsubsec_resultsTheoreticalEnc}

First, we perform experiments using the semantic system with a theoretical encoder, the results of which are shown in Figure \ref{fig_semTheoryResults}. The top-left plot shows the overall semantic distortion $\delta(\textbf{z}_{\jmath^*}, \hat{\textbf{z}})$ and the encoder distortion $\delta( \textbf{z}_{\jmath^*}, \textbf{z})$ as functions of the signal-to-noise ratio (SNR) for an encoder standard deviation $\sigma_e = 0.01$ under the AWGN channel. In each case, the overall semantic distortion becomes limited by the performance of the semantic encoder as the channel quality improves. Therefore, the semantic encoder is the key factor in determining the best-case performance of the overall semantic communication system. Improving the performance of the semantic encoder will lower the distortion floor of the overall system.

The bottom-left plot of Figure \ref{fig_semTheoryResults} confirms this; in these plots, the probability of semantic error $P(\hat{\jmath} \neq \jmath^*)$ is plotted against SNR, as well as the probability of packet error. In our simulations, a packet is composed of 160 bytes, which corresponds to 20 semantic representations $\textbf{z}$ each consisting of a vector of 4 8-bit floating point values, with rate 1/2 channel coding. We observe that the probability of semantic error reaches a hard limit at precisely the SNR where the overall distortion in the top-left plot reaches its floor. As such, this error floor can be reduced by improving the quality of the semantic encoder. This plot also illustrates the intuition that a syntactic error does not necessarily induce a semantic error. For example at $E_b/N_0 = 0$dB we see that the probability of packet error is nearly 1 for all modulation types, but for BPSK modulation the probability of semantic error is around $0.08$. Thus, we are able to achieve good semantic performance despite poor technical performance, illustrating the potential robustness of semantic communication systems to traditional bit errors.

The right-hand plots in Figure \ref{fig_semTheoryResults} provide similar results for the Rician channel model with fading parameter $K = 6$dB. We see that the semantic distortion curves are rather similar to those for the AWGN channel. Moreover, the probability of semantic error curves are also relatively similar to those for the AWGN channel, despite the greatly increased probability of packet error. These results indicate that the proposed approach can provide even greater benefits with respect to robustness in the presence of channel fading.

\begin{wrapfigure}{r}{0.49\textwidth}
  \begin{center}
    \includegraphics[width=0.48\textwidth]{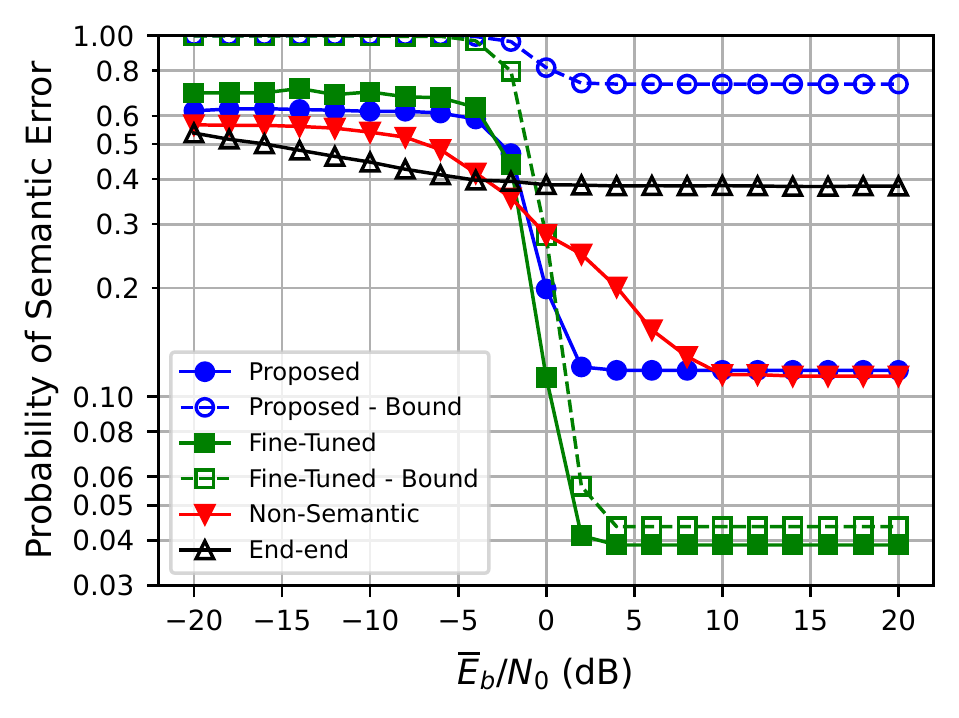}
  \end{center}
  \caption{\label{fig_semNonSemFineTunedPErrs} Probability of semantic error and the corresponding upper bound vs. average SNR for a general and fine-tuned semantic system, end-end semantic system, and non-semantic system, for 16-QAM modulation under the AWGN channel. The semantic systems achieve a rate reduction of over 99.9\% as compared to the non-semantic system.}
\end{wrapfigure}

\subsubsection{Practical Systems}
\label{subsubsec_resultsCNNSystems}

Next, we examine the performance of the semantic and non-semantic systems, and the results are shown in Figure \ref{fig_semNonSemFineTunedPErrs}. All of the results in this subsection were obtained for the AWGN channel.

The curves marked by circles show the probability of semantic error and the bound obtained from Lemma \ref{lem_withPriors}. We see similar behavior as the system with a theoretical semantic encoder; specifically, the performance improves with the channel conditions up to a certain point, at which it becomes limited by imperfections of the semantic encoder. We also observe that the upper bound provided by Lemma \ref{lem_withPriors} indeed holds as an upper bound for the semantic performance, though it does not appear to be a very tight bound. 

One reason for this loose bound lies in the fact that we are measuring semantic distortion with respect to concept prototypes. Therefore, even if the semantic encoder is perfect in mapping a given input to its true representation $\textbf{z}$, this point may not be close to any prototypes; in other words, some data are \textit{conceptually ambiguous} (which is different from \textit{semantic ambiguity} that was discussed in section \ref{subsec_ambigRedunScale}). Intuitively, we should observe improved performance and a tighter bound by eliminating this ambiguity. To test this hypothesis, we created a new dataset by randomly sampling points from the initial dataset that lie within the spheres centered at the prototype points with radius given by (\ref{eq_tauOpt}). This new dataset was then used to fine-tune the general CNN semantic encoder and to test the performance of the system with this fine-tuned encoder. The results are given in Figure \ref{fig_semNonSemFineTunedPErrs}, and are denoted by the curves marked with squares. These results confirm our intuition, as the best-case probability of error drops from around $0.11$ in the general case to around $0.04$. We also observe that the upper bounds on the semantic performance are indeed much tighter. 

The curve marked by filled triangles in Figure \ref{fig_semNonSemFineTunedPErrs} displays the results for the non-semantic system, while the unfilled triangles denote the performance of the end-end system.First, we observe that the end-end system performs the best out of the three for low SNR, which agrees with previous results for this kind of approach \cite{2021_xie_deepSC}. However, the end-end approach performs the worst at high SNR values. This is most likely due to the inclusion of the non-differentiable binarization, 16-QAM, and AWGN layers which introduce non-smooth behavior, impairing the gradient-based updates of the backpropagation algorithm during training. Next, we see that the non-semantic system performs similarly to the proposed system at extreme SNRs, with better performance at SNR values under 0dB and worse performance at SNR values greater than 0dB.

Finally, it is important to note the efficiency gains of the semantic systems to the non-semantic system. Each semantic representation consists of 4 8-bit values, coded at a rate of 1/2 for a total of 64 bits (the same number of bits is used in the end-end system). In the non-semantic system, each inference is carried out over 2 112$\times$112$\times$3, for a total of 75,264 pixels. Each pixel uses 8 bits, resulting in 602,112 bits transmitted per inference. Thus, the semantic systems reduce the overall rate by over 99.9\%, with the proposed system achieving the best performance. These results demonstrate that the proposed system can achieve reliable semantic communication with a massive reduction in rate.

\section{Conclusion}
\label{sec_conclusion}

In this paper, we have greatly expanded both theoretical and practical aspects of a semantic communication system with knowledge representation based on the theory of conceptual spaces. We have provided formal definitions of key aspects, defined the notions of semantic distortion and semantic error, and derived error bounds that follow from these definitions. We have shown how this theory can serve as the underlying foundation for a truly intelligent reasoning-driven system, and how important notions such as context can be easily incorporated into the framework. To illustrate some of the key benefits of our approach, we simulated a VRET application utilizing semantic communication, and demonstrated robust semantic performance with more than 99.9\% percent reduction in rate as compared to a more traditional system. The results reveal some important insights, such as the importance of the semantic encoder and the conceptual space design in the overall performance of the semantic communication system.

There are many interesting future directions that can be taken to extend this work. As was noted in section \ref{sec_semComWithConSpaces}, the study of semantic communication when the transmitter and receiver do not share a common conceptual space will be important to future development of these systems. Similarly, the process of autonomously learning the underlying conceptual space in an efficient and optimal way is another important area for future work. We plan to study this learning process, as well as how efficient and intelligent reasoning systems can be developed on top of a conceptual space-based knowledge base. Overall, we anticipate that the continued development and implementation of conceptual space-based semantic communication systems will unlock truly innovative and intelligent systems for the next generation of wireless communications.

\printbibliography

\end{document}